\documentclass[lettersize,journal]{IEEEtran}
\usepackage{amsmath,amsfonts,amssymb}
\usepackage{algorithmic}
\usepackage{algorithm}
\usepackage{array}
\usepackage{multirow} 
\usepackage{booktabs} 
\usepackage[caption=false,font=normalsize,labelfont=sf,textfont=sf]{subfig}
\usepackage{textcomp}
\usepackage{stfloats}
\usepackage{url}
\usepackage{bm}
\usepackage{verbatim}
\usepackage{graphicx}
\usepackage{cite}
\usepackage[capitalise]{cleveref}
\usepackage{siunitx}
\usepackage{amsthm}

\usepackage{mathtools}
\usepackage{enumitem}
\usepackage{tikz}
\usetikzlibrary{
    arrows.meta,
    calc,
    decorations.pathreplacing,
    shapes.misc,
    shapes.geometric
}
\usepackage{fancyhdr}
\newcommand{\copyrightstatement}{%
This work has been submitted to the IEEE for possible publication. Copyright may be transferred without notice, after which\\
this version may no longer be accessible.%
}

\fancypagestyle{paperheader}{%
    \fancyhf{}
    \fancyhead[C]{%
        \footnotesize
        \parbox{0.86\textwidth}{\centering \copyrightstatement}
    }
    \fancyhead[R]{\footnotesize\thepage}

}

\usepackage{makecell}      
\usepackage{pifont}        
\newcommand{\cmark}{\ding{51}}  
\newcommand{\xmark}{\ding{55}}  

\usepackage{placeins}
\newtheorem{theorem}{Theorem}
\newtheorem{lemma}{Lemma}

\theoremstyle{definition}

\newcommand{\PF}{$P\text{-}f$}
\newcommand{\QV}{$Q\text{-}V$}

\begin{document}

\onecolumn
\thispagestyle{empty}

\vspace*{0.8in}

\begin{center}
    {\LARGE\bfseries Copyright Statements\par}
\end{center}

\vspace*{0.7in}

{\noindent\Large
This work has been submitted to the IEEE for possible publication. Copyright may be
transferred without notice, after which this version may no longer be accessible.
\par}

\clearpage
\setcounter{page}{1}
\twocolumn

\title{Stability and Droop Characteristics Analysis of Observer-Synchronized Grid-Forming Control}

\author{Aizuo Chen, Xiangyu Meng, ~\IEEEmembership{Member,~IEEE}, Yue Zhu, ~\IEEEmembership{Member,~IEEE}



}

\markboth{}%
{Shell \MakeLowercase{\textit{et al.}}: A Sample Article Using IEEEtran.cls for IEEE Journals}

\maketitle

\thispagestyle{paperheader}
\pagestyle{paperheader}


\begin{abstract}
    This paper analyzes the stability and droop characteristics of Observer-Synchronized grid-forming control. First, a second-order nonlinear autonomous model is derived under the quasi-steady-state assumption. Based on the derived model, the equilibrium points and nonlinear stability properties are investigated using the qualitative theory of differential equations. Explicit parameter conditions are obtained to guarantee almost global asymptotic stability of the desired equilibrium. Furthermore, an analytical expression of the nonlinear droop characteristic is derived to reveal the relationship between active power and frequency. The theoretical analysis is validated through electromagnetic transient simulations and experiments.
\end{abstract}

\begin{IEEEkeywords}
Grid-forming inverter, Observer-Synchronized control, nonlinear stability, droop characteristic.
\end{IEEEkeywords}

\section{Introduction}

\IEEEPARstart{T}{he} increasing penetration of renewable energy sources, such as photovoltaic and wind generation, is reshaping modern power systems from synchronous-generator (SG)-dominated toward those dominated by inverter-based resources (IBRs) \cite{Gu2023Power}. As SGs are gradually displaced by power electronic converters, the inherent inertia, damping, system strength, and electromechanical synchronization mechanisms of conventional power systems are significantly reduced \cite{Hatziargyriou2021Definition,Zhu2024IMR}. Grid-forming (GFM) inverters have therefore attracted considerable attention, since they can autonomously establish voltage and frequency references and support power balancing \cite{Bahrani2024GFMLandscape,Rathnayake2021GFMModeling,Li2022Revisiting}.

Various synchronization approaches for GFM inverters have been developed, with Droop control, virtual synchronous generator (VSG) control, and virtual oscillator control (VOC) being among the most widely studied approaches. 
Droop control, first used in AC micro grids \cite{Pogaku2007Modeling}, is one of the widely used methods due to its simple implementation and intuitive active-power--frequency (\PF) and reactive-power--voltage (\QV) regulation mechanisms \cite{Du2020Comparative}. 
VSG control emulates the swing dynamics of synchronous machines by introducing virtual inertia and damping, thereby reproducing the electromechanical synchronization behavior of synchronous generators \cite{Zhong2011Synchronverter,Li2023VSGInertia}. In contrast, VOC realizes synchronization by emulating the nonlinear dynamics of limit-cycle oscillators \cite{Lu2022VOCState,Kong2022PBVOC}. Dispatchable virtual oscillator control (dVOC) extends VOC by incorporating active-power--frequency (\PF) and reactive-power--voltage (\QV) regulation mechanisms, enabling Droop-like steady-state power sharing while preserving oscillator-based synchronization dynamics \cite{dvoc-1,He2024ComplexDroop,Kong2022PBVOC}.

Observers have been adopted in grid-connected inverter control. 
For example, in \cite{Kukkola2015Observer}, a state observer is employed to achieve higher dynamic performance and better resonance damping of the grid-following inverter. In \cite{Heo2025DOBVI}, a disturbance observer is adopted to improve the virtual-impedance implementation in the VSG control of a GFM inverter. In \cite{Zhang2025Observer}, an internal model-based extended state observer is embedded into a Droop-controlled GFM inverter with the aim of reducing sensors. In these studies, the observer is incorporated into a pre-existing synchronization or control framework to improve its performance. Therefore, we classify these methods as observer-enhanced inverter control.

In contrast, an observer-based GFM control has recently emerged as a distinct GFM paradigm \cite{obGFM,obGFM-1}, where synchronization, power regulation, voltage regulation, and current limiting are jointly realized through the observer dynamics. Unlike observer-assisted phase-locked loop (PLL)-based and VSG-based control schemes, this method utilizes the observer dynamics themselves to generate the internal voltage-angle dynamics of the inverter. Therefore, its synchronization mechanism is governed entirely by the observer dynamics rather than by an explicitly prescribed droop law, swing equation, or oscillator model. As such, it represents an alternative GFM paradigm alongside Droop-, VSG-, and dVOC-based control. In this paper, we refer to this approach as Observer-Synchronized (OBS) GFM control, or simply as OBS control.

Despite extensive stability studies on Droop, VSG, VOC, and dVOC GFM control \cite{SimpsonPorco2013DroopSync, Zhong2011Synchronverter, dvoc-1, He2025CrossForming, anti-sa}, the analytical understanding of OBS control remains limited because of its distinct synchronization mechanism. Existing studies have demonstrated its feasibility and inherent droop behavior through experiments, and have investigated controller design and local performance through small-signal and sensitivity analyses \cite{obGFM,obGFM-1}. However, the nonlinear equilibrium and stability properties of OBS control remain unclear, and an explicit analytical expression for its inherent \PF droop characteristic has not yet been established.

This paper develops a nonlinear analytical framework for OBS control. The main contributions are summarized as follows:
\begin{itemize}
\item A second-order nonlinear autonomous model is derived under the quasi-steady-state assumption, providing a compact basis for stability and droop analysis.

\item The equilibrium and stability properties of the nonlinear system are rigorously analyzed using the qualitative theory of differential equations. Explicit parameter conditions are established to guarantee almost global asymptotic stability of the desired synchronized equilibrium.

\item An explicit analytical expression for the nonlinear droop characteristic is derived, revealing how the \PF relationship is shaped by the controller parameters.

\item The theoretical findings are validated through electromagnetic transient (EMT) simulations and experiments, with comparisons against Droop and dVOC control in a single-inverter system and a modified IEEE 14-bus system.

\end{itemize}

The remainder of this paper is organized as follows. Section II presents the OBS control structure, derives the reduced-order nonlinear model, and analyzes its stability and nonlinear droop characteristics. Section III evaluates the proposed analysis through EMT simulations and compares the OBS control with Droop and dVOC controls. Section IV provides experimental validation. Section V concludes the paper.

\section{Modeling and Theoretical Analysis of OBS Control}

\subsection{Review of OBS Control}

Fig.~\ref{controller} shows the control diagram of an OBS GFM inverter connected to a power grid. The inverter-side filter inductor current, $i_{\alpha\beta}$, is measured and transferred into the complex current $i$ using a reference frame rotating at the angular speed $\omega_b$, corresponding to the system base frequency. Unlike conventional GFM control schemes, the OBS control does not necessarily require voltage measurements. 

\begin{figure}[!t]
    \centering
    \includegraphics[width=3.4in]{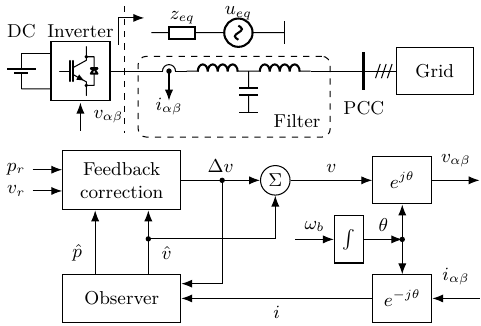}
    \caption{OBS GFM inverter control diagram.}
    \label{controller}
\end{figure}

The controller consists of an observer and a feedback correction. First, an observer with 2 inputs, $i$ and $\Delta v$, is employed to estimate the inverter's output voltage $\hat{v}$ and active power $\hat{p}$, where $\Delta v$ is the output from the feedback correction. $\hat{v}$ is a complex number represented as $\hat{v}=V e^{j\delta}$. The dynamics of $\hat{v}$ are governed by the equation
\begin{equation}
\dot{\hat{v}} = k_o \Delta v - (k_o - j\omega_b) L_o \dot{i}, \label{ob}
\end{equation}
where $k_o$, $L_o$, and $\omega_b$ denote the observer gain, observer inductance, and nominal grid angular frequency, respectively. $\dot{i}$ denotes the current dynamics, which is from the measurements. Notably, $L_o$ is a design parameter. In \cite{obGFM}, $L_o$ is chosen as the estimated equivalent inductance between the inverter and the grid, whereas in this paper, it is set to a range of values around the inverter-side filter inductance for comparison. $\hat{p}$ is calculated as
\begin{equation}
    \hat{p}=\Re(\hat{v}i^*),
\end{equation}
where $()^*$ denotes the conjugate complex.

The feedback correction takes both the reference power $p_r$, the reference voltage magnitude $v_r$, and all outputs of the observer as inputs, and formulates a correction term $\Delta v$ as
\begin{equation}
\Delta v = \frac{\hat{v}}{V} \left[ k_p (p_r - \hat{p}) + (1 - j k_v)(v_r - V) \right],
\end{equation}
where $k_p$ and $k_v$ are the corresponding control gains. 

Finally, the inverter output voltage reference $v$ is generated as the sum of the observer's estimate and the feedback correction:
\begin{equation}
v = \hat{v} + \Delta v.
\end{equation}
$v$ is subsequently transferred back to the $\alpha \text{-} \beta$ frame and then into duty-cycle signals for inverter driving.

Compared with existing GFM control algorithms, the OBS control features a simpler control structure, fewer control loops, and reduced sensor requirements. It is worth noting that a current limiter can be easily embedded into the observer via the transparent current control described in \cite{obGFM}, and is also adopted in the simulations and experiments in this work. 

\subsection{Nonlinear System Modeling}

The frequency deviation is defined as $\Delta\omega = \omega_g - \omega_b$, where $\omega_g$ refers to the actual grid frequency.
The observer dynamics in \eqref{ob}, originally formulated in the $\omega_b$-rotating reference frame, can be equivalently transformed into the $\omega_g$-rotating reference frame as
\begin{equation}
    \dot{\hat{v}} = k_o \Delta v - (k_o - j\omega_b) L_o \dot{i} - j\Delta\omega \bigl((k_o - j\omega_b) L_o i + \hat{v}\bigr).
    \label{eqd}
\end{equation}

Under a quasi-steady-state assumption, wherein the electro-magnetic dynamics of the grid are neglected and $\dot{i}=0$, the complex-domain dynamic equation and its associated algebraic constraints are given by
\begin{equation}
\label{symp}
    \begin{cases}
        \dot{\hat{v}} = k_o \Delta v - j\Delta\omega \bigl((k_o - j\omega_b) L_o i + \hat{v}\bigr), \\
        i = \dfrac{\hat{v} + \Delta v - u_{eq}}{z_{eq}}.
    \end{cases}
\end{equation}
The expression of $i$ is formed from a Thevenin equivalence shown in Fig.~\ref{controller}, that $z_{eq} = Z e^{j\phi}$ represents the equivalent impedance seen from the converter terminal, including both the filter and grid impedances. $u_{eq} = Ue^{j0}$ denotes the Thevenin equivalent voltage of the grid, whose phase angle is chosen as the reference.

Multiplication of $\dot{\hat{v}}$ by $\hat{v}^*$ yields
\begin{equation}
    \begin{split}
        \dot{\hat{v}} \hat{v}^*
        =\, & k_o V \left[ k_p(p_r-\hat{p})+(1-jk_v)(v_r-V) \right]          \\
            & - j\Delta\omega\bigl((k_o-j\omega_b)L_o \hat{s}^* + V^2\bigr).
    \end{split}
    \label{eq_vvc}
\end{equation}
\eqref{eq_vvc} can be easily derived using the product and the chain rules, so the proof is omitted for brevity.
Seperating the real and imaginary components in \eqref{eq_vvc} yields the dynamic equations governing the voltage magnitude $V$ and phase angle $\delta$, respectively:
\begin{equation}
    \dot{V}
    = k_o k_p (p_r - \hat{p})
    + k_o (v_r - V)
    - \frac{\Delta\omega L_o}{V}\bigl( \omega_b \hat{p} + k_o \hat{q} \bigr),
    \label{dv}
\end{equation}
\begin{equation}
    \label{dtheta}
    \dot{\delta}
    = -\frac{k_o k_v}{V}(v_r - V)
    - \frac{\Delta\omega L_o}{V^2}\bigl( k_o \hat{p} - \omega_b \hat{q} \bigr)
    - \Delta\omega.
\end{equation}

Subsituting $\hat{v}=Ve^{j\delta}$, $u_{eq}=U$, and $z_{eq}=Ze^{j\phi}$ into the algebraic current constraint in \eqref{symp} yields the explicit expression:
\begin{equation}
    i
    =\frac{e^{j\delta}\big(V + k_p(p_r-\hat p)+(1-jk_v)(v_r-V)\big)-U}{Ze^{j\phi}},
\end{equation}
Accordingly, the complex power at the converter terminal is derived as
\begin{equation}
    \hat{s}
    =\hat{v}i^*=
    \frac{V}{Z}e^{j\phi}
    \left(v_r +
    k_p(p_r-\hat p)+jk_v(v_r-V)-Ue^{j\delta}
    \right).
    \label{eq_apparaent}
\end{equation}
It is straightforward to extract the explicit expressions for the active power $\hat{p}$ and reactive power $\hat{q}$ from \eqref{eq_apparaent}:

\begin{equation}
\begin{aligned}
\hat{p}(V,\delta)
&= \frac{V}{Z+k_pV\cos\phi}
\Big[
(k_pp_r+v_r)\cos\phi
\\&\qquad+k_v(V-v_r)\sin\phi
-U\cos(\delta+\phi)
\Big],
\\[1ex]
\hat{q}(V,\delta)
&= \frac{V}{Z}
\Big[
(-k_p\hat{p}+k_pp_r+v_r)\sin\phi
\\&\qquad+k_v(v_r-V)\cos\phi
-U\sin(\delta+\phi)
\Big].
\end{aligned}\label{eq_pq_hat}
\end{equation}
From \eqref{eq_pq_hat}, it is evident that $\hat{p}$ and $\hat{q}$ constitute nonlinear functions of the system state variables $V$ and $\delta$. Substitution of these expressions into the magnitude and phase dynamic equations \eqref{dv} and \eqref{dtheta} yields a second-order nonlinear autonomous system, which forms the theoretical foundation for the subsequent analysis of stability and nonlinear droop characteristics.

\subsection{Nonlinear Stability}

A rigorous analysis of the system's stability properties is carried out under the condition $\Delta \omega = 0$, such that \eqref{dv} and \eqref{dtheta} reduce to the following nonlinear autonomous system:
\begin{equation}
    \begin{cases}
        \dot{V} = f(V,\delta)=\dfrac{N(V,\delta)}{D(V)}, \\[1.2ex]
        \dot{\delta} = g(V)=\dfrac{k_o k_v(V-v_r)}{V},
    \end{cases}
    \label{aeq}
\end{equation}
where the numerator \(N(V,\delta)\) is represented by a quadratic equation
\begin{equation}
    N(V,\delta)=AV^2+B(\delta)V+C,
    \label{eq:N_def}
\end{equation}
with the coefficients and the denominator \(D(V)\) as follows:
\begin{itemize}
    \item \(A = -k_ok_p(\cos\phi + k_v \sin\phi)\);
    \item \(B(\delta) = k_o\bigl(k_p U \cos(\delta+\phi) + k_p k_v v_r \sin\phi - Z\bigr)\);
    \item \(C = k_o Z(k_p p_r + v_r)\);
    \item \(D(V) = Z + k_p V \cos\phi\).
\end{itemize}

Considering an inductive grid where $0<\phi<\pi/2$, and having all controller gains larger than zero,
we impose the following standing assumptions:
\begin{equation}
    A<0,
    \quad
    C>0,
    \quad
    D(V)>0\quad\text{for all }V>0.
    \label{eq:standing_conditions}
\end{equation}

Hence, for each fixed \(\delta\), the numerator
$N(V,\delta)$ is a downward-opening quadratic polynomial with a positive vertical intercept. Moreover, because $
    B(\delta)^2-4AC>B(\delta)^2\ge 0$, 
the equation \(N(V,\delta)=0\) has two distinct real roots with opposite signs. Hence, for each \(\delta\), there exists a unique positive root, denoted by \(V_2(\delta)>0\), and one negative root.

To characterize the global phase-plane structure, define
\begin{equation}
    \begin{aligned}
        B_{\max}
         & =
        k_o(k_pU+k_pk_vv_r\sin\phi-Z), \\
        B_{\min}
         & =
        k_o(-k_pU+k_pk_vv_r\sin\phi-Z).
    \end{aligned}
    \label{eq:Bmin_Bmax_def}
\end{equation}

For \(B\in\mathbb R\), let
\[
    V_2(B)=\frac{-B-\sqrt{B^2-4AC}}{2A}
\]
be the unique positive root of \(AV^2+BV+C=0\), and we set \(V_-:=V_2(B_{\min})\), \(V_+:=V_2(B_{\max})\).

The equilibrium phase condition is obtained by imposing \(\dot{\delta}=0\) and \(\dot V=0\). From \eqref{aeq}, it is clear that \(\dot{\delta}=0\) implies \(V=v_r\). Substitution into $N(V,\delta)=0$ gives into \(N(V,\delta)=0\) gives
\begin{equation}
    \cos(\delta+\phi)=\sigma,
    \qquad
    \sigma=
    \frac{v_r^2\cos\phi-Zp_r}{Uv_r}.
    \label{eq:sigma_def}
\end{equation}
It is worth noting that the derived $\sigma$ is the normalized form of the AC active-power transfer
condition. Therefore, $|\sigma|=1$ corresponds to the active-power transfer
limit, while $|\sigma|<1$ indicates operation within this limit.

Whenever $-1<\sigma<1$, define
\begin{equation}
    \begin{aligned}
        \alpha = \arccos\sigma\in(0,\pi),\quad  \delta_s = \alpha-\phi, \\
        \delta_u^-=-\alpha-\phi,
        \quad
        \delta_u^+=2\pi-\alpha-\phi.
    \end{aligned}
    \label{eq:phase_definitions}
\end{equation}
Moreover, define
\begin{equation}
    B_s:=B(\delta_s),
    \qquad
    N_s(V):=AV^2+B_sV+C.
    \label{eq:Bs_Ns_def}
\end{equation}
Since \(N(v_r,\delta_s)=0\), one has
\begin{equation}
    B_s=-Av_r-\frac{C}{v_r},
    \quad
    N_s(V)
    =
    (V-v_r)
    \left(
    AV-\frac{C}{v_r}
    \right).
    \label{eq:Ns_factor}
\end{equation}
Finally, introduce
\begin{equation}
    \Gamma(V)
    :=
    \frac{k_ok_vD(V)}
    {
        V\left(\dfrac{C}{v_r}-AV\right)
    },
    \quad V>0,
    \label{eq:Gamma_def}
\end{equation}
and
\begin{equation}
    \Gamma_R:=\max_{V\in[v_r,V_+]}\Gamma(V),
    \quad
    \Gamma_L:=\max_{V\in[V_-,v_r]}\Gamma(V).
    \label{eq:Gamma_RL_def}
\end{equation}
Under \(-1<\sigma<1\), define the two regions
\begin{equation}
    \begin{aligned}
        \mathcal R_R
         & :=
        \{(V,\delta):\, v_r<V<V_+,\
        \delta_s<\delta<\delta_u^+\}, \\
        \mathcal R_L
         & :=
        \{(V,\delta):\, V_-<V<v_r,\
        \delta_u^-<\delta<\delta_s\}.
    \end{aligned}
    \label{eq:RR_RL_def}
\end{equation}

The regions \(\mathcal R_R\) and \(\mathcal R_L\) are illustrated in Fig.~\ref{fig:RR_RL_regions}.
In \(\mathcal R_R\), \(V\) strictly decreases while \(\delta\) strictly increases.
In \(\mathcal R_L\), \(V\) strictly increases while \(\delta\) strictly decreases.
Moreover, the slopes of the trajectories in both regions are bounded by \(\Gamma(V)\), which is crucial for establishing the global stability of the system.

\begin{figure}[!t]
    \centering
    \includegraphics[width=0.45\textwidth]{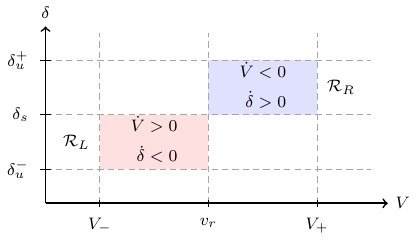}
    \caption{Illustration of the regions \(\mathcal R_R\) and \(\mathcal R_L\) with the signs of \(\dot V\) and \(\dot\delta\).}
    \label{fig:RR_RL_regions}
\end{figure}

\begin{lemma}[Equilibrium classification and reference voltage position]
    \label{lem:eq_classification}
    Suppose that \(-1<\sigma<1\). Then, in each \(2\pi\)-periodic phase interval, the system has exactly two equilibria, namely \((v_r,\delta_s)\) and \((v_r,\delta_u^-)\). The former is locally asymptotically stable, the latter is a saddle, and \(V_-<v_r<V_+\).
\end{lemma}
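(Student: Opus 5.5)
We would argue in three steps: locate the equilibria, linearize at each one, and then use monotonicity of the positive root $V_2(B)$ to place $v_r$ between $V_-$ and $V_+$.

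\emph{Equilibria.} Since $k_ok_v>0$, the equation $\dot\delta=g(V)=0$ forces $V=v_r$. Because $D(v_r)>0$, $\dot V=0$ is then equivalent to $N(v_r,\delta)=0$. As already derived in \eqref{eq:sigma_def}, this is $\cos(\delta+\phi)=\sigma$. For $-1<\sigma<1$ it has exactly two solutions $\delta+\phi=\pm\alpha$ modulo $2\pi$, with $\alpha\in(0,\pi)$. These give $\delta_s$ and $\delta_u^-$, the latter congruent to $\delta_u^+$. So there are exactly two equilibria per $2\pi$ period.

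\emph{Local stability.} We compute the Jacobian of \eqref{aeq} at $(v_r,\delta^*)$. Because $N(v_r,\delta^*)=0$, the quotient rule leaves only $\partial_V f=\partial_V N/D$ and $\partial_\delta f=\partial_\delta N/D$. Moreover $\partial_\delta g=0$ and $\partial_V g=k_ok_v/v_r>0$. Hence
\[
\det J=-\partial_\delta f\cdot\frac{k_ok_v}{v_r},\qquad \operatorname{tr}J=\partial_V f.
\]

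We have $\partial_\delta N=-k_ok_pUv_r\sin(\delta^*+\phi)$.
\begin{itemize}
\item At $\delta_s$: $\sin\alpha>0$, so $\partial_\delta f<0$ and $\det J>0$.
\item At $\delta_u^-$: $\sin(-\alpha)<0$, so $\det J<0$ and the point is a saddle.
\end{itemize}

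For the trace at $\delta_s$, the factorization \eqref{eq:Ns_factor} gives
\[
\partial_V N_s(v_r)=Av_r-\frac{C}{v_r}<0,
\]
using $A<0<C$. Thus $\operatorname{tr}J<0$, and together with $\det J>0$ this makes $(v_r,\delta_s)$ hyperbolic and locally asymptotically stable.

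\emph{Position of $v_r$.} Write $V_2(B)$ for the positive root of $AV^2+BV+C$. We claim $V_2$ is strictly increasing in $B$. By implicit differentiation,
\[
V_2'(B)=-\frac{V_2}{2AV_2+B}.
\]
At the positive root, $2AV_2+B=-\sqrt{B^2-4AC}<0$, so $V_2'(B)>0$.

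Next we compare the relevant values of $B$. Since $B(\delta)$ depends on $\delta$ through $k_ok_pU\cos(\delta+\phi)$, we get
\[
B_s=B_{\min}+k_ok_pU(1+\sigma),\qquad B_{\max}=B_s+k_ok_pU(1-\sigma).
\]
Because $|\sigma|<1$ and $k_ok_pU>0$, this yields $B_{\min}<B_s<B_{\max}$. Finally, $v_r$ is the positive root at $B_s$, since $N(v_r,\delta_s)=0$. Monotonicity of $V_2$ therefore gives $V_-<v_r<V_+$.

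The only delicate point is the sign bookkeeping at the saddle and in the monotonicity step, where we use $2AV_2+B=-\sqrt{\Delta}$ rather than a case analysis.
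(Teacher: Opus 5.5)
Your proposal is correct and follows the paper's proof essentially step for step. It uses the same reduction of the equilibria to $\cos(\delta+\phi)=\sigma$, the same trace and determinant signs, and the same ordering $B_{\min}<B_s<B_{\max}$ combined with monotonicity of $V_2(B)$; your factorization via \eqref{eq:Ns_factor} is the paper's $A(V-v_r)(V-V_1)$ with $V_1=C/(Av_r)<0$. Two small differences: you verify $V_2'(B)=V_2/\sqrt{B^2-4AC}>0$, which the paper only asserts, and you rightly compute the trace only at $(v_r,\delta_s)$, where it is actually needed.
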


\begin{proof}
    The condition \(\dot\delta=0\) gives \(V=v_r\). Substitution into
    \(\dot V=0\) gives \eqref{eq:sigma_def}. Hence \(-1<\sigma<1\) gives
    exactly two solutions modulo \(2\pi\), namely \(\delta_s\) and
    \(\delta_u^-\).

    At an equilibrium \((v_r,\delta_{eq})\), \(N(v_r,\delta_{eq})=0\), such that
    \[
        \operatorname{tr}J
        =
        \frac{1}{D(v_r)}
        \frac{\partial N}{\partial V}(v_r,\delta_{eq}),
    \]
where $J$ is the Jacobian matrix of \eqref{aeq}.
    For the fixed phase \(\delta_{eq}\), the two roots of \(N(V,\delta_{eq})=0\) are
    \(v_r\) and \(V_1<0\). Thus
    \[
        N(V,\delta_{eq})=A(V-v_r)(V-V_1).
    \]
    It follows that
    \[
        \frac{\partial N}{\partial V}(v_r,\delta_{eq})
        =
        A(v_r-V_1)<0 .
    \]
    Since \(D(v_r)>0\), one obtains \(\operatorname{tr}J<0\).

    The determinant of the Jacobian matrix is
    \[
        \det J
        =
        \frac{k_o^2k_pk_vU\sin(\delta_{eq}+\phi)}
        {D(v_r)}.
    \]

    Since \(\sin(\delta_s+\phi)>0\) and \(\sin(\delta_u^-+\phi)<0\), one has
    \(\det J|_{\delta_s}>0\) and \(\det J|_{\delta_u^-}<0\). Hence
    \((v_r,\delta_s)\) is locally asymptotically stable, whereas
    \((v_r,\delta_u^-)\) is a saddle.

    Finally, \(B(\delta)\) is an increasing affine function of
    \(\cos(\delta+\phi)\). Since \(\cos(\delta_s+\phi)=\sigma\in(-1,1)\), it follows
    that \(B_{\min}<B_s<B_{\max}\). Moreover, \(N(v_r,\delta_s)=0\), so
    \(v_r=V_2(B_s)\). By the strict monotonicity of \(V_2(B)\) with respect to
    \(B\), we obtain $V_-<v_r<V_+$.

\end{proof}

\begin{lemma}[Global voltage trapping region]
    \label{lem:voltage_trap}
    The set \(\mathcal T:=[V_-,V_+]\times\mathbb S^1\) is positively invariant under the dynamics \eqref{aeq}.
    Moreover, every trajectory with \(V(0)>0\) enters \(\mathcal T\) in finite
    time. In particular, \(V>V_+\) implies \(\dot V<0\), while \(0<V<V_-\)
    implies \(\dot V>0\).
\end{lemma}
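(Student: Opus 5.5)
The plan is to work entirely with the sign of \(\dot V=N(V,\delta)/D(V)\). Since \(D(V)>0\) for \(V>0\) by \eqref{eq:standing_conditions}, \(\operatorname{sign}\dot V=\operatorname{sign}N(V,\delta)\). So we reduce everything to comparing \(N(V,\delta)\) with the two extremal quadratics
\[
N_{\max}(V):=AV^2+B_{\max}V+C,\qquad N_{\min}(V):=AV^2+B_{\min}V+C .
\]
The key observation is that \(B(\delta)\) is an increasing affine function of \(\cos(\delta+\phi)\in[-1,1]\), with \(k_ok_pU>0\). Hence \(B_{\min}\le B(\delta)\le B_{\max}\) for every \(\delta\). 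For \(V>0\) this gives the sandwich
\[
N_{\min}(V)\le N(V,\delta)\le N_{\max}(V),
\]
uniform in \(\delta\).

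Next I will locate the sign changes of \(N_{\max}\) and \(N_{\min}\) on \(V>0\). Exactly as argued for \(N(\cdot,\delta)\) before \eqref{eq:Bmin_Bmax_def}, each of these is a downward-opening quadratic with \(C>0\), so it has one negative and one positive root. The positive roots are \(V_+=V_2(B_{\max})\) and \(V_-=V_2(B_{\min})\). Therefore \(N_{\max}(V)<0\) for \(V>V_+\), and \(N_{\min}(V)>0\) for \(0<V<V_-\). Combining with the sandwich gives the two pointwise claims of the lemma: \(V>V_+\Rightarrow\dot V<0\), and \(0<V<V_-\Rightarrow\dot V>0\). These hold uniformly in \(\delta\), so \(\delta\) may be treated as an arbitrary time-varying input and no coupling through \(\dot\delta\) matters.

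Positive invariance of \(\mathcal T\) will follow by a standard boundary argument. On \(V=V_+\) we have \(N(V_+,\delta)\le N_{\max}(V_+)=0\), and on \(V=V_-\) we have \(N(V_-,\delta)\ge 0\). If a trajectory starting in \(\mathcal T\) had \(V(t_1)>V_+\), take the last time \(t_0<t_1\) with \(V(t_0)=V_+\). Then \(V>V_+\) on \((t_0,t_1]\), which forces \(\dot V<0\) there, so \(V(t_1)<V(t_0)\), a contradiction. The same argument handles \(V_-\). The strict inequalities just established make the weak boundary sign sufficient, so no tangency subtlety arises.

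Finite-time entry is the main obstacle. Mere monotonicity of \(V\) outside \(\mathcal T\) does not exclude asymptotic approach to \(V_\pm\) without crossing, and it also does not exclude escape to infinity or approach to \(0\), where \(\dot\delta\) blows up. I plan to get uniform speed bounds on compact subsets.

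For \(V(0)>V_+\), \(V\) decreases and stays in \([V_+,V(0)]\). Suppose \(V\) never drops to \(V_+\). Then it converges to some \(V^\ast\ge V_+\). Take \(V^\ast>V_+\) first: on \([V^\ast,V(0)]\), \(\dot V\le \max N_{\max}/\min D=: -c<0\), since \(N_{\max}<0\) there and \(D\) is positive and continuous. This contradicts convergence. So any hypothetical limit is \(V_+\) itself.

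To actually reach \(V_+\) in finite time, I would show the trajectory enters every neighbourhood \(V\le V_++\epsilon\) in finite time, which follows from the same uniform bound. I would then interpret ``enters \(\mathcal T\)'' as entering an arbitrarily small neighbourhood, or use \(N_{\max}(V)\le -\kappa(V-V_+)\) with \(\kappa=-N_{\max}'(V_+)>0\); the latter gives only exponential approach. If strict finite-time entry is needed, I would exploit the fact that \(N(V,\delta)<N_{\max}(V)\) strictly except when \(\cos(\delta+\phi)=1\). Near \(V_+\), \(\dot\delta=g(V)\approx k_ok_v(V_+-v_r)/V_+>0\) by Lemma~\ref{lem:eq_classification}, so \(\delta\) keeps rotating and spends a definite fraction of each revolution where \(B(\delta)<B_{\max}-\eta\). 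There \(\dot V\le -c'<0\) uniformly, which produces a finite crossing time. The \(V_-\) side is symmetric, using \(g(V)<0\) near \(V_-<v_r\). This rotation argument is the delicate step I would write out carefully.
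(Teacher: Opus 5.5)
Your proposal is correct. Its core step is the same as the paper's: both sandwich $N(V,\delta)$ between $AV^2+B_{\min}V+C$ and $AV^2+B_{\max}V+C$ and read off the sign of $\dot V$ from the positive roots $V_-$ and $V_+$.

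The two routes differ in the remaining two claims.

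\textbf{Positive invariance.} The paper uses the weak boundary signs together with the remark that equality occurs only at isolated phases where $\dot\delta=g(V_\pm)\neq0$, so trajectories ``cannot remain tangent''. Your last-exit-time plus mean-value argument needs only the strict sign in the open region $V>V_+$ (resp.\ $0<V<V_-$). It is shorter and fully rigorous.

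\textbf{Finite-time entry.} The paper only asserts that the strict signs outside the strip imply finite-time entry. As you point out, this does not follow on its own: $\dot V<0$ with $\dot V\to0$ as $V\downarrow V_+$ still permits asymptotic approach. Your rotation argument is what actually closes this step. It rests on the same fact $g(V_\pm)\neq0$ that the paper invokes only for tangency, and it can be written in a few lines:
\begin{itemize}
\item For $V\ge V_+$ one has $N(V,\delta)=N_{\max}(V)-k_ok_pU\bigl(1-\cos(\delta+\phi)\bigr)V\le -k_ok_pU\bigl(1-\cos(\delta+\phi)\bigr)V_+$.
\item On the same range, $0<g(V_+)\le g(V)<k_ok_v$.
\item Hence $dV/d\delta\le -k_ok_pUV_+\bigl(1-\cos(\delta+\phi)\bigr)/\bigl(k_ok_v\max_{[V_+,V(0)]}D\bigr)$.
\item Each revolution of $\delta$ is completed in time at most $2\pi/g(V_+)$ and lowers $V$ by a fixed amount, contradicting $V\ge V_+$ for all time.
\item The $V_-$ side is symmetric, with $V\ge V(0)>0$ keeping $|g|$ bounded.
\end{itemize}
So you can drop the hedge about reading ``enters $\mathcal T$'' as entering arbitrarily small neighbourhoods: the rotation argument proves the statement as written.

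One small slip: in your compactness bound on $[V^\ast,V(0)]$, the constant should be $\max N_{\max}/\max D$ (or simply $\max(N_{\max}/D)$), not $\max N_{\max}/\min D$, because $N_{\max}<0$ there.

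In short, the paper's proof is more compact but leaves the finite-time claim unjustified, whereas yours is complete.
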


\begin{proof}
    Since \(B_{\min}\le B(\delta)\le B_{\max}\) and \(D(V)>0\), one has
    \[
        \frac{AV^2+B_{\min}V+C}{D(V)}
        \le
        \dot V
        \le
        \frac{AV^2+B_{\max}V+C}{D(V)} .
    \]
    By the definitions of \(V_-\) and \(V_+\), together with \(A<0\) and
    \(C>0\), each quadratic \(AV^2+BV+C\) is positive to the left of its
    positive root and negative to the right. Hence \(V>V_+\) gives
    \(\dot V<0\), and \(0<V<V_-\) gives \(\dot V>0\).

    Consequently, \(\dot V\le0\) on \(V=V_+\) and \(\dot V\ge0\) on \(V=V_-\).
    Equality can occur only at the isolated phases where
    \(B(\delta)=B_{\max}\) or \(B(\delta)=B_{\min}\). Since \(V_-<v_r<V_+\),
    we have \(\dot\delta=g(V)\neq0\) on both voltage boundaries, so no
    trajectory can remain tangent to them. Thus \(\mathcal T\) is positively
    invariant, and the strict signs outside the strip imply finite-time entry
    into \(\mathcal T\) for all \(V(0)>0\).
\end{proof}

\begin{lemma}[Slope bounds in the left and right regions]
    \label{lem:slope_bounds}
    On \(\mathcal R_R\), \(f<0\), \(g>0\), and \(g/(-f)<\Gamma(V)\).
    On \(\mathcal R_L\), \(f>0\), \(g<0\), and \((-g)/f<\Gamma(V)\).
\end{lemma}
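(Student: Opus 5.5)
The plan is to compare \(N(V,\delta)\) with \(N_s(V)\) by writing it as \(N_s(V)\) plus a phase-dependent correction. Since \(B(\delta)-B_s=k_ok_pU\bigl(\cos(\delta+\phi)-\sigma\bigr)\), we have
\[
N(V,\delta)=N_s(V)+k_ok_pU\bigl(\cos(\delta+\phi)-\sigma\bigr)V .
\]
By \eqref{eq:Ns_factor},
\[
N_s(V)=-(V-v_r)\Bigl(\tfrac{C}{v_r}-AV\Bigr).
\]
Note that \(\tfrac{C}{v_r}-AV>0\) for all \(V>0\), because \(A<0\) and \(C>0\).

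Next we fix the sign of the correction term using the phase windows. On \(\mathcal R_R\) the phase satisfies \(\delta_s<\delta<\delta_u^+\), so \(\delta+\phi\in(\alpha,2\pi-\alpha)\). Since \(\alpha=\arccos\sigma\in(0,\pi)\), the cosine is strictly less than \(\sigma\) on this open interval, and the correction term is strictly negative there. On \(\mathcal R_L\) we have \(\delta+\phi\in(-\alpha,\alpha)\), so \(\cos(\delta+\phi)>\sigma\) and the correction term is strictly positive.

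Combining the two pieces gives the signs of \(N\).
\begin{itemize}
\item On \(\mathcal R_R\), where \(V>v_r\), the term \(N_s(V)\) is negative and the correction is negative. Hence \(N(V,\delta)<N_s(V)<0\).
\item On \(\mathcal R_L\), where \(V<v_r\), both terms are positive. Hence \(N(V,\delta)>N_s(V)>0\).
\end{itemize}
Dividing by \(D(V)>0\) gives \(f<0\) on \(\mathcal R_R\) and \(f>0\) on \(\mathcal R_L\). The signs of \(g\) follow immediately from \(g=k_ok_v(V-v_r)/V\): \(g>0\) for \(V>v_r\) and \(g<0\) for \(V<v_r\).

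For the slope bound on \(\mathcal R_R\), the sign comparison above gives the strict inequality \(-f>-N_s(V)/D(V)>0\), so \(1/(-f)<D(V)/(-N_s(V))\). Multiplying by \(g>0\) yields
\[
\frac{g}{-f}<\frac{k_ok_v(V-v_r)}{V}\cdot\frac{D(V)}{(V-v_r)\bigl(\tfrac{C}{v_r}-AV\bigr)}=\Gamma(V),
\]
where the factor \(V-v_r\neq0\) cancels. On \(\mathcal R_L\) the same argument applies with the sign conventions swapped. We have \(f>N_s(V)/D(V)>0\) and \(-g=k_ok_v(v_r-V)/V>0\). Since \(N_s(V)=(v_r-V)\bigl(\tfrac{C}{v_r}-AV\bigr)\) there, the factor \(v_r-V\) cancels in the same way, giving \((-g)/f<\Gamma(V)\).

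The only delicate step is strictness of the cosine comparison on the open phase windows. This works because \(\delta_s\), \(\delta_u^\pm\) are exactly the points where \(\cos(\delta+\phi)=\sigma\), and the assumption \(-1<\sigma<1\) guarantees these endpoints are excluded from the open intervals. The voltage bounds \(V_\pm\) in the definitions of \(\mathcal R_R\) and \(\mathcal R_L\) play no role in the argument.
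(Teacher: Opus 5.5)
Your proof is correct and follows the same route as the paper. Like the paper, you compare $N(V,\delta)$ with $N_s(V)$ using $B(\delta)<B_s$ on $\mathcal R_R$ and $B(\delta)>B_s$ on $\mathcal R_L$, read off the signs from the factorization \eqref{eq:Ns_factor} with $\frac{C}{v_r}-AV>0$, and cancel the factor $V-v_r$ to get $\Gamma(V)$. Your identity $N=N_s+k_ok_pU(\cos(\delta+\phi)-\sigma)V$ simply makes explicit a step the paper leaves implicit.

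One small correction to your closing remark: the bound $V>V_->0$ is what ensures $V>0$ on $\mathcal R_L$. You need $V>0$ there for $g<0$, for the sign of the correction term, and for $D(V)>0$, so that bound is not irrelevant, although it is also supplied by working on the phase space $V>0$.
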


\begin{proof}
    From \eqref{eq:Ns_factor}, for \(V>v_r\),
    \[
        -N_s(V)
        =
        (V-v_r)\left(\frac{C}{v_r}-AV\right),
    \]
    while for \(V<v_r\),
    \[
        N_s(V)
        =
        (v_r-V)\left(\frac{C}{v_r}-AV\right).
    \]
    Since \(A<0\) and \(C>0\), one has
    \(\frac{C}{v_r}-AV>0\) for all \(V>0\).

    In \(\mathcal R_R\), \(\delta+\phi\in(\alpha,2\pi-\alpha)\), so
    \(B(\delta)<B_s\). Since \(V>v_r\), one has
    \(N(V,\delta)<N_s(V)<0\). Hence \(f<0\) and \(g>0\). Moreover,
    \[
        \frac{g(V)}{-f(V,\delta)}
        <
        \frac{k_ok_v(V-v_r)D(V)}
        {V(V-v_r)\left(\dfrac{C}{v_r}-AV\right)}
        =
        \Gamma(V).
    \]

    In \(\mathcal R_L\), \(\delta+\phi\in(-\alpha,\alpha)\), so
    \(B(\delta)>B_s\). Since \(V<v_r\), one has
    \(N(V,\delta)>N_s(V)>0\). Hence \(f>0\) and \(g<0\). Similarly,
    \[
        \frac{-g(V)}{f(V,\delta)}
        <
        \frac{k_ok_v(v_r-V)D(V)}
        {V(v_r-V)\left(\dfrac{C}{v_r}-AV\right)}
        =
        \Gamma(V).
    \]
\end{proof}

\begin{lemma}[Bendixson--Dulac exclusion of contractible periodic orbits]
    \label{lem:dulac_contractible}
    The system \eqref{aeq} admits no contractible periodic orbit on
    \((0,\infty)\times\mathbb S^1\).
\end{lemma}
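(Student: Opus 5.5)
The plan is to apply the Bendixson--Dulac criterion on the open half-cylinder \((0,\infty)\times\mathbb S^1\) with a multiplier depending only on \(V\). The point is that \(\dot\delta=g(V)\) does not depend on \(\delta\). So for any smooth \(h(V)>0\), the rescaled field \(hF=(hf,\,hg)\) has divergence
\[
\partial_V\bigl(h(V)f(V,\delta)\bigr)+\partial_\delta\bigl(h(V)g(V)\bigr)=\partial_V\bigl(h(V)f(V,\delta)\bigr).
\]
We therefore only need \(h f\) to be strictly monotone in \(V\) for every fixed \(\delta\).

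I would choose \(h(V)=D(V)/V\). This is smooth and strictly positive on \(V>0\) by the standing assumption \eqref{eq:standing_conditions}. It cancels the denominator and divides the quadratic numerator by \(V\):
\[
h(V)f(V,\delta)=\frac{N(V,\delta)}{V}=AV+B(\delta)+\frac{C}{V}.
\]
Hence
\[
\operatorname{div}(hF)=A-\frac{C}{V^2}<0
\qquad\text{for all }(V,\delta)\in(0,\infty)\times\mathbb S^1,
\]
because \(A<0\) and \(C>0\). This divergence is strictly negative everywhere, independently of \(\delta\) and of \(\sigma\).

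Next, suppose a nonconstant contractible periodic orbit \(\Gamma_0\) exists. Because it is contractible, it lifts to a closed curve in the universal cover \((0,\infty)\times\mathbb R\), which is a simply connected planar domain. The lifted system has the same form, since \(f\) and \(g\) are \(2\pi\)-periodic in \(\delta\). The lift is a simple closed curve: a periodic orbit of an autonomous \(C^1\) system is a Jordan curve. It therefore bounds a bounded open region \(\Omega\) contained in \((0,\infty)\times\mathbb R\), by the Jordan curve theorem and simple connectivity.

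Green's theorem then gives
\[
\iint_\Omega \operatorname{div}(hF)\,dV\,d\delta=\oint_{\partial\Omega}\bigl(hf\,d\delta-hg\,dV\bigr).
\]
The line integral vanishes, because along the orbit \(d\delta=g\,dt\) and \(dV=f\,dt\), so the integrand becomes \(h(fg-gf)\,dt=0\). The left-hand side, however, is strictly negative. This contradiction proves the lemma.

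The only step needing care is the topological one: passing from "contractible on the cylinder" to "bounds a region in the plane where Green's theorem applies". I would handle it by lifting, as above, and by noting that \(\Omega\) stays inside \(V>0\), so \(h\) is well defined on it. Non-contractible orbits, which wrap around \(\mathbb S^1\), are not excluded by this argument. This is consistent with the statement, which treats only contractible orbits.
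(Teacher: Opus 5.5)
Your proof is correct and follows the paper's argument: the same multiplier $D(V)/V$, the same divergence $A-C/V^2<0$ (using that $Mg$ does not depend on $\delta$), and the same application of Bendixson--Dulac on the lifted plane $(0,\infty)\times\mathbb{R}$. You also write out the covering-space lift, the Jordan-curve and half-plane containment, and the Green's-theorem computation, which the paper covers simply by citing the criterion.
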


\begin{proof}
    On the lifted plane \((0,\infty)\times\mathbb R\), take
    \(M(V,\delta)=D(V)/V\). Then
    \[
        Mf=AV+B(\delta)+\frac{C}{V},
        \qquad
        Mg=\frac{k_ok_vD(V)(V-v_r)}{V^2}.
    \]
    Since \(Mg\) is independent of \(\delta\),
    \[
        \frac{\partial(Mf)}{\partial V}
        +
        \frac{\partial(Mg)}{\partial\delta}
        =
        A-\frac{C}{V^2}<0.
    \]
    Thus, by the Bendixson--Dulac criterion, there are no closed periodic
    orbits in the lifted plane, and hence no contractible periodic orbit on the
    cylinder.
\end{proof}

\begin{theorem}[Almost global stability under slope-margin conditions]
    \label{thm:almost_global_stability}
    Assume \(-1<\sigma<1\) and
    \begin{equation}
        \Gamma_R(V_+-v_r)<2(\pi-\alpha),
        \qquad
        \Gamma_L(v_r-V_-)<2\alpha.
        \label{eq:slope_margin_conditions}
    \end{equation}
    Then \((v_r,\delta_s)\) is almost globally asymptotically stable on
    \((0,\infty)\times\mathbb S^1\): every trajectory with \(V(0)>0\)
    converges to \((v_r,\delta_s)\), except those lying on the stable manifold
    of the saddle. In particular, the cylinder admits no nontrivial periodic
    orbit.
\end{theorem}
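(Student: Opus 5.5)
The plan is to combine the trapping strip of Lemma~\ref{lem:voltage_trap} with Poincar\'e--Bendixson theory on the cylinder. Lemma~\ref{lem:dulac_contractible} already excludes contractible periodic orbits, so the remaining task is to exclude \emph{non-contractible} (rotational) periodic orbits, which wind once around \(\mathbb S^1\). The slope-margin conditions \eqref{eq:slope_margin_conditions} are designed for exactly this purpose.

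\textbf{Step 1: reduce to the compact strip.} By Lemma~\ref{lem:voltage_trap}, every trajectory with \(V(0)>0\) enters the compact, positively invariant strip \(\mathcal T\) in finite time. Hence its \(\omega\)-limit set is a nonempty, compact, connected, invariant subset of \(\mathcal T\). By Lemma~\ref{lem:eq_classification}, \(\mathcal T\) contains only the two equilibria \((v_r,\delta_s)\) and \((v_r,\delta_u^-)\). The Poincar\'e--Bendixson theorem on the cylinder, a planar annulus, then says the \(\omega\)-limit set is one of three things: an equilibrium, a periodic orbit, or a union of equilibria joined by connecting orbits.

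\textbf{Step 2: rule out rotational periodic orbits and homoclinic loops.} A rotational orbit must cross the line \(V=v_r\), where \(\dot\delta=0\) and \(\dot\delta\) changes sign. So \(\delta\) is monotone on each side of this line: it increases for \(V>v_r\) and decreases for \(V<v_r\). A closed orbit winding in, say, the positive \(\delta\)-direction must therefore spend some excursion in \(V>v_r\) sweeping across the phase interval \((\delta_s,\delta_u^+)\), of length \(2(\pi-\alpha)\).

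Consider the part of such an excursion lying in \(\mathcal R_R\), where \(f<0\) and \(g>0\). Lemma~\ref{lem:slope_bounds} gives
\[
\left|\frac{d\delta}{dV}\right| < \Gamma(V)\le\Gamma_R .
\]
Since \(V\) is monotone decreasing there, with total drop at most \(V_+-v_r\), the phase advance is bounded by
\[
\int \left|\frac{d\delta}{dV}\right| dV \le \Gamma_R(V_+-v_r) < 2(\pi-\alpha).
\]
Thus a trajectory that enters \(\mathcal R_R\) through its left edge \(\delta=\delta_s\) cannot reach the right edge \(\delta=\delta_u^+\). Since \(\dot V<0\) throughout \(\mathcal R_R\), it must instead exit through the bottom edge \(V=v_r\), at a phase in \((\delta_s,\delta_u^+)\).

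Exit through the top edge \(V=V_+\) is impossible by Lemma~\ref{lem:voltage_trap}. So the trajectory crosses into \(V<v_r\), where \(\delta\) now decreases. The symmetric argument on \(\mathcal R_L\), using \(\Gamma_L(v_r-V_-)<2\alpha\), shows the trajectory cannot traverse \(\mathcal R_L\) past \(\delta_u^-\) either.

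I expect the main obstacle to be organizing these crossings so that they show that \(\delta\) stays confined to a bounded window in the lift. The intended window is one around \(\delta_s\), contained in \((\delta_u^-,\delta_u^+)\), which the trajectory can never leave. I also have to handle trajectories that start outside \(\mathcal R_R\cup\mathcal R_L\): for example, in \(V>v_r\) with \(\delta\in(\delta_u^-,\delta_s)\), a region where \(\dot V\) can be positive. There I would argue that such a trajectory either reaches \(\delta_s\) and enters \(\mathcal R_R\), or crosses \(V=v_r\). Either way, after a bounded phase drift it is captured by the above dichotomy.

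Once \(\delta\) is bounded in the lift, no rotational periodic orbit can exist. Likewise no heteroclinic cycle can wrap the cylinder, and no homoclinic loop of the saddle can encircle it. A contractible homoclinic loop is excluded by Dulac, since \(\partial(Mf)/\partial V+\partial(Mg)/\partial\delta<0\) forces the enclosed area-integral to be nonzero.

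\textbf{Step 3: conclude.} With periodic orbits and cycles excluded, each \(\omega\)-limit set is a single equilibrium. The saddle \((v_r,\delta_u^-)\) (or one of its \(2\pi\)-translates) is the limit only for initial data on its one-dimensional stable manifold, which is a measure-zero set. Every other trajectory converges to \((v_r,\delta_s)\), which is locally asymptotically stable by Lemma~\ref{lem:eq_classification}. This gives almost global asymptotic stability, and the absence of nontrivial periodic orbits is a by-product of Step 2.
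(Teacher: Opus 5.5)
Your route works, but it uses the slope-margin conditions \eqref{eq:slope_margin_conditions} differently from the paper. The paper never follows individual excursions. It fixes $m_R\in\bigl(\Gamma_R,\,2(\pi-\alpha)/(V_+-v_r)\bigr)$ and $m_L\in\bigl(\Gamma_L,\,2\alpha/(v_r-V_-)\bigr)$. It then builds a hexagon $\Omega$ bounded by $\delta=\delta_u^{\pm}$, $V=V_{\pm}$, and two slanted edges through $(v_r,\delta_u^{+})$ and $(v_r,\delta_u^{-})$ with slopes $-m_R$ and $-m_L$. Positive invariance is checked edge by edge: the sign of $g$ on the constant-$\delta$ edges, Lemma~\ref{lem:voltage_trap} on $V=V_\pm$, and Lemma~\ref{lem:slope_bounds} with $\Gamma<m$ on the slanted edges. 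Since $\Omega$ contains the whole segment $\{\delta=\delta_s,\ V_-\le V\le V_+\}$ but lies inside one phase period, any rotating orbit that crosses this segment is trapped. Your bound $\int|d\delta/dV|\,dV\le\Gamma_R(V_+-v_r)<2(\pi-\alpha)$ is the same estimate, read along a trajectory instead of along a barrier. The polygon gives a one-shot invariance check with no bookkeeping. Your version must track crossings of the nullcline $V=v_r$, but it does confront the saddle-connection alternative in Poincar\'e--Bendixson. The paper skips that case when it goes straight from ``no periodic orbits'' to ``every $\omega$-limit set is an equilibrium''.

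The bookkeeping you flag as the main obstacle closes with one fact you have not stated. On $V=v_r$,
\[
N(v_r,\delta)=v_r\bigl(B(\delta)-B_s\bigr)=k_ok_pUv_r\bigl(\cos(\delta+\phi)-\sigma\bigr),
\]
so $\dot V<0$ for $\delta\in(\delta_s,\delta_u^+)$ and $\dot V>0$ for $\delta\in(\delta_u^-,\delta_s)$.

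Without this, a trajectory leaving $\mathcal R_R$ through $V=v_r$ could re-cross upward at a phase still in $(\delta_s,\delta_u^+)$. It would then start a new excursion that does not begin at the left edge, and the phase budgets could add up. With this fact, after leaving $\mathcal R_R$ the trajectory stays in $V<v_r$ until it enters $\mathcal R_L$ through $\delta=\delta_s$ (or converges to the stable equilibrium), and symmetrically for $\mathcal R_L$. Hence once a trajectory in $\mathcal T$ passes $\delta=\delta_s$ at some $V\neq v_r$, its lift satisfies $\delta_s-\Gamma_L(v_r-V_-)<\delta<\delta_s+\Gamma_R(V_+-v_r)$ for all later times.

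A rotating periodic orbit must pass some $\delta_s+2\pi k$, which gives the contradiction directly. So your unproved preliminary claim that a rotating orbit ``must cross $V=v_r$'' is not needed.

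Also, boundedness of $\delta$ in the lift does not by itself exclude a rotating saddle connection from $(v_r,\delta_u^-)$ to $(v_r,\delta_u^+)$, because such a connection is bounded too. What excludes it is that the window lies strictly inside $(\delta_u^-,\delta_u^+)$. The unstable branch entering $V>v_r$ can reach $\delta_u^+$ only by first passing $\delta_s$, and afterwards it stays below $\delta_s+\Gamma_R(V_+-v_r)<\delta_u^+$; the opposite orientation is symmetric. Contractible homoclinic loops are then removed by the Green's-theorem form of Dulac, as you say.
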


\begin{figure}[!t]
    \centering
    \includegraphics[width=0.48\textwidth]{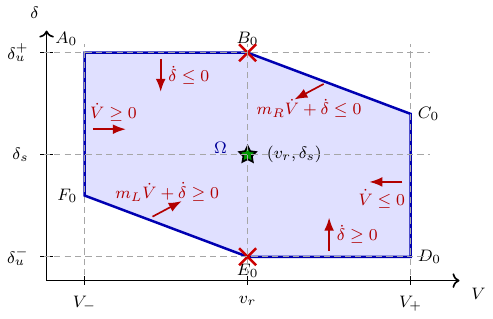}
    \caption{Construction of the positively invariant cell \(\Omega\).}
    \label{fig:omega_construction}
\end{figure}

\begin{proof}
    By Lemma~\ref{lem:eq_classification}, in each \(2\pi\)-phase period the
    system has one asymptotically stable equilibrium \((v_r,\delta_s)\) and one
    saddle \((v_r,\delta_u^-)\), with \(V_-<v_r<V_+\).

    Choose \(m_R,m_L>0\) such that
    \[
        \Gamma_R<m_R<\frac{2(\pi-\alpha)}{V_+-v_r},
        \qquad
        \Gamma_L<m_L<\frac{2\alpha}{v_r-V_-}.
    \]
    Let \(\Omega\) be the polygon shown in Fig.~\ref{fig:omega_construction}, whose vertices are given by
    \[
        A_0:(V_-,\delta_u^+),
        B_0:(v_r,\delta_u^+),
        C_0:(V_+,\delta_u^+-m_R(V_+-v_r)),
    \]
    \[
        D_0:(V_+,\delta_u^-),
        E_0:(v_r,\delta_u^-),
        F_0:(V_-,\delta_u^-+m_L(v_r-V_-)).
    \]
    Since
    $\delta_u^+-m_R(V_+-v_r)>\delta_s$ and
    $\delta_u^-+m_L(v_r-V_-)<\delta_s$,
    the horizontal line segment
    \(\{(V,\delta_s):V_-\le V\le V_+\}\) lies in \(\Omega\).

    The horizontal edges of \(\Omega\) are inward-pointing by the sign of
    \(g(V)\), the vertical edges by Lemma~\ref{lem:voltage_trap}, and the
    slanted edges by Lemma~\ref{lem:slope_bounds} together with
    \(\Gamma(V)\le\Gamma_R<m_R\) and \(\Gamma(V)\le\Gamma_L<m_L\). Hence
    \(\Omega\) is positively invariant.

    For each \(j\in\mathbb Z\), let \(\Omega_j:=\Omega+(0,2\pi j)\). Each
    \(\Omega_j\) contains the full segment
    \(\{(V,\delta_s+2\pi j):V_-\le V\le V_+\}\). Any rotating periodic orbit on
    the cylinder would have a lift crossing such a segment, and once it enters
    some \(\Omega_j\), positive invariance prevents it from reaching another
    phase period. Thus no rotating periodic orbit exists. By
    Lemma~\ref{lem:dulac_contractible}, no contractible periodic orbit exists
    either. Hence the system has no nontrivial periodic orbit on the cylinder.

    By Lemma~\ref{lem:voltage_trap}, every positive orbit is bounded. The
    Poincaré--Bendixson theorem then implies that every \(\omega\)-limit set is
    an equilibrium. Since the only equilibria are \((v_r,\delta_s)\) and the
    saddle \((v_r,\delta_u^-)\), every trajectory converges to one of them. The
    set of initial conditions converging to the saddle is its stable manifold,
    which has measure zero. Therefore all other trajectories converge to
    \((v_r,\delta_s)\), proving almost global asymptotic stability.
\end{proof}

Although the above discussion assumes $\Delta \omega=0$, the stability properties remain valid for a range of frequency deviations, owing to the structural stability of the system~\cite{perko2013differential}, which is further validated in the simulations.

\subsection{ \PF Droop Characteristics}

Under steady-state conditions, from \eqref{dv} with $\dot{V}=0$:
\begin{equation}
    k_o k_p (p_r - \hat{p}) + k_o (v_r - V)
    =
    \frac{\Delta\omega L_o}{V}\bigl( \omega_b \hat{p} + k_o \hat{q} \bigr).
    \label{steady_V}
\end{equation}

From \eqref{dtheta} with $\dot{\delta}=0$, rearrange to solve for the voltage deviation $(v_r-V)$:
\begin{equation}
    v_r - V
    =
    -\frac{V\Delta\omega}{k_o k_v}
    - \frac{\Delta\omega L_o}{k_o k_v V}\bigl( k_o \hat{p} - \omega_b \hat{q} \bigr).
    \label{vr_sol}
\end{equation}

Substitute \eqref{vr_sol} into \eqref{steady_V}, expand, and collect like terms in $\hat{p}$. After algebraic simplification, the closed-form exact solution for the estimated active power $\hat{p}$ becomes:

\begin{equation}
    \hat{p} = \frac
    {p_r - \dfrac{\Delta\omega V}{k_p k_o k_v}
        + \dfrac{\Delta\omega L_o}{V k_p}
        \left( \dfrac{\omega_b}{k_o k_v} - 1 \right) \hat{q}}
    {1 + \dfrac{\Delta\omega L_o}{V k_p}
        \left( \dfrac{\omega_b}{k_o} + \dfrac{1}{k_v} \right)}.
    \label{eq:p_exact}
\end{equation}

By neglecting the cross-coupling terms associated with $\hat{q}$ and assuming a per-unit representation (where $V \approx v_r$), the expression in \eqref{eq:p_exact} can be simplified to
\begin{equation} \label{eq:p_approx}
    p \approx \hat{p} \approx \frac{p_r - \frac{\Delta\omega v_r}{k_p k_o k_v}}{1 + \frac{\Delta\omega L_o}{v_rk_p} \left( \frac{1}{k_o} + \frac{1}{k_v} \right)}.
\end{equation}

Equation~\eqref{eq:p_approx} demonstrates that the relationship between $p$ and $\Delta\omega$ follows a nonlinear droop characteristic. This behavior is predominantly determined by the internal control gains and the filter inductance, rather than by the external grid impedance. Consequently, the nonlinear droop profile can be actively shaped through controller parameter design to satisfy specific system-level performance requirements.

To simpilify the design of droop characteristics, \eqref{eq:p_approx} is linearized around the nominal operating point $\Delta\omega=0$. The local slope of the nonlinear droop characteristic at this point is given by
\begin{equation}
    \left.
    \frac{\partial p}{\partial \Delta\omega}
    \right|_{\Delta\omega=0}
    =
    -
    \frac{v_r}{k_p k_o k_v}
    - 
    \frac{p_r L_o}{v_r k_p}
    \left(
        \frac{1}{k_o}+\frac{1}{k_v}
    \right).
    \label{eq:dp_ddw_zero}
\end{equation}

Therefore, the first-order Taylor expansion of \eqref{eq:p_approx} around $\Delta\omega=0$ yields the local linear droop relation:
\begin{equation}
    p
    \approx
    p_r
    -
    D_p \Delta\omega ,
    \label{eq:p_linear_droop}
\end{equation}
where the equivalent active-power droop coefficient is
\begin{equation}
    D_p
    =
    \frac{v_r}{k_p k_o k_v}
    +
    \frac{p_r L_o}{v_r k_p}
    \left(
        \frac{1}{k_o}+\frac{1}{k_v}
    \right).
    \label{eq:Dp}
\end{equation}
In practice, $L_0$ can be tuned to shape the desired droop features while the remaining parameters can be selected primarily to satisfy the stability conditions. A smaller $L_0$ results in a \PF droop characteristic that more closely approximates a linear relationship, as shown in the simulation results.



\section{Simulation Validations}
This section validates the theoretical findings through phase portrait analysis and EMT simulations. The validation is organized in two parts: a single-inverter infinite-bus system is first used to examine the reduced-order analytical results and compare different GFM methods, and a modified IEEE 14-bus system is then adopted to evaluate the performance of OBS control in a multi-device network.

\subsection{Single-Inverter Infinite-Bus System}

The single-inverter infinite-bus system is employed to validate the stability of OBS control. The inverter parameters are given in Table~\ref{tab:system_params}.

\begin{table}[!t]
    \centering
    \caption{Parameters of the OBS control}
    \label{tab:system_params}
    \begin{tabular}{ll}
        \toprule
        Parameter          & Value                \\
        \midrule
        Base voltage / power / freq. & 690 V / 5 MW / $100\pi$ rad/s \\
        Filter impedance             & 0.105 + j0.313 p.u.           \\
        Filter capacitance           & 0.03 p.u.                     \\
        Observer inductor ($L_o$)    & 0.313 p.u.                           \\
        Observer gain ($k_o$)        & 0.5 p.u.                           \\
        Voltage gain ($k_v$)         & 1.5 p.u.                          \\
        Power gain ($k_p$)           & 0.03 p.u.                          \\
        \bottomrule
    \end{tabular}
\end{table}

\subsubsection{Large-Signal Stability}
We first examined the parameters under a wide range of $p_r$ and SCR values with a fixed $X_g/R_g$ ratio of~7, as summarized in Table~\ref{tab:scr_pr_cases}. The proposed conditions are satisfied in all tested cases except for SCR $=1$ with $p_r=1.0$. Noting that an equilibrium does not even exist in this case because the active power reference exceeds the transmission limit. Such results demonstrate that, for the given OBS control parameters, whenever an equilibrium exists, almost globally stability is achieved.

\begin{table}[!t]
    \centering
    \caption{Satisfaction of the proposed stability conditions under different $p_r$ and SCR values (\cmark: satisfied; \xmark: violated).}
    \label{tab:scr_pr_cases}
    \begin{tabular}{c|cccccc}
        \toprule
        \multirow{2}{*}{$p_r$} & \multicolumn{6}{c}{SCR} \\
        \cline{2-7}
                               & $1$    & $3$    & $5$    & $7$    & $10$   & $20$   \\
        \midrule
        $0.1$                  & \cmark & \cmark & \cmark & \cmark & \cmark & \cmark \\
        $0.4$                  & \cmark & \cmark & \cmark & \cmark & \cmark & \cmark \\
        $0.7$                  & \cmark & \cmark & \cmark & \cmark & \cmark & \cmark \\
        $1.0$                  & \xmark & \cmark & \cmark & \cmark & \cmark & \cmark \\
        \bottomrule
    \end{tabular}
\end{table}


Fig.~\ref{portrait} presents the phase portraits of the system for different values of $\Delta\omega$ and SCR, with $X_g/R_g$ fixed at~7, $p_r=0.5$, and $v_r=1.05$. The phase portraits remain nearly unchanged as $\Delta\omega$ varies from $-0.015$ to $0.015$ p.u., indicating that the stable equilibrium remains almost globally stable under these frequency deviations, also implying it by itself and the rboustness against frequency deviations.
\begin{figure*}[!t]
    \centering
    \includegraphics[width=6.9 in]{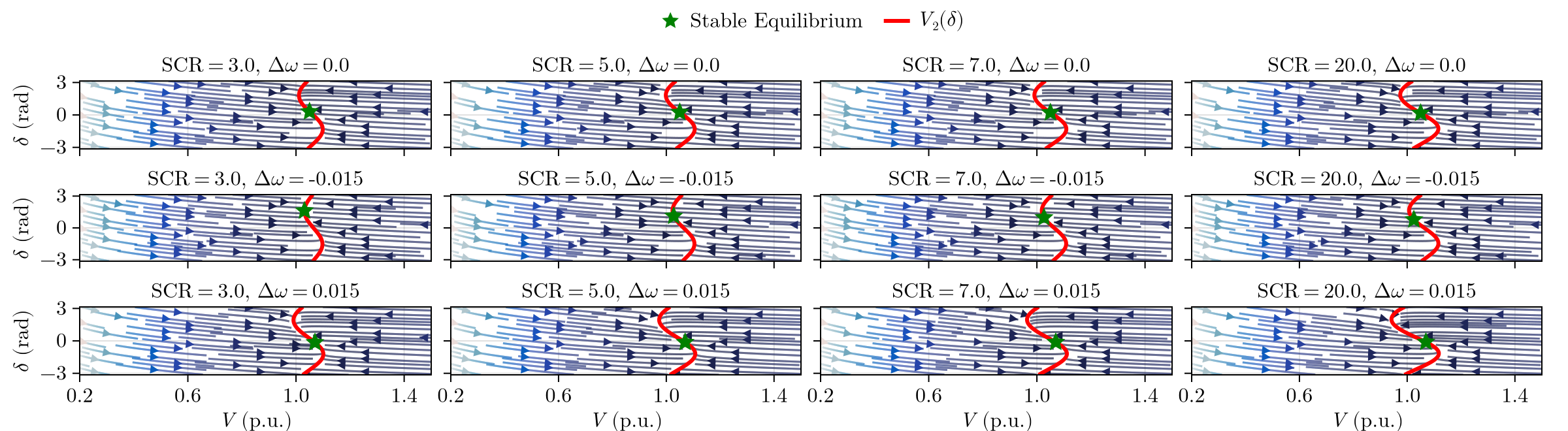}
    \caption{
        Phase portrait analysis under varying SCRs and frequency deviations. Reference power $p_r$ is set as 0.5 p.u..
    }
    \label{portrait}
\end{figure*}

The large-signal stability comparison is conducted through time-domain EMT simulations. All methods are initialized at the same power flow operating point, with active- and reactive-power injections of $0.5$ and $0$ p.u., respectively. The initial grid conditions are set to $\mathrm{SCR}=3.0$ and $X_g/R_g=7$. During the simulation, the grid voltage drops from 1.0~p.u. to 0.2~p.u. at $t=0.1\,\mathrm{s}$ and recovers after fault durations of 0.2, 0.3, and 0.4~s, respectively. The current-limiting threshold is set to 1.5~p.u. Droop and dVOC control are employed for comparisons. The power-loop dynamics for Droop and dVOC are defined in \eqref{eq:droop} and \eqref{eq:dvoc}, respectively, based on the formulations in \cite{dvoc-1, dvoc-2}. The remaining control architecture incorporates nested voltage and current closed-loop control, supplemented by angle-priority current magnitude limiting and anti-saturation designs detailed in \cite{anti-sa}. Control parameters are summarized in Table~\ref{tab:all_control_params}.
\begin{subequations}
\begin{align}
\makebox[1.2cm][l]{Droop} &
\left\{
\begin{aligned}
\omega &= 1 + m_p (p_r - p),\\
v_r &= V + m_q (q_r - q)
\end{aligned}
\right.
\label{eq:droop}
\\[1ex]
\makebox[1.2cm][l]{dVOC} &
\left\{
\begin{aligned}
\omega &= 1 + \frac{m_p}{v_r^2}(p_r-p),\\
\dot{v}_r &= m_q v_r (V^2-v_r^2)
+\frac{m_p}{v_r}(q_r-q)
\end{aligned}
\right.
\label{eq:dvoc}
\end{align}
\end{subequations}

\begin{table}[!t]
    \centering
    \caption{Parameters of Different GFM Control Schemes}
    \label{tab:all_control_params}
    \begin{tabular}{lcc}
        \toprule
        Parameter & Droop & dVOC \\
        \midrule
        Voltage-loop damping ratio
                                      & 1.0
                                      & 1.0 \\
        Voltage-loop PI gains ($k_{pv},k_{iv}$)
                                      & 0.7 / 2.45
                                      & 0.7 / 2.45 \\
        Current-loop PI gains ($k_{pi},k_{ii}$)
                                      & 1.32 / 0.385
                                      & 1.32 / 0.385 \\
        Coefficient ($m_p$)
                                      & 0.015
                                      & 0.015 \\
        Coefficient ($m_q$)
                                      & 0.01
                                      & 0.75 \\
        \bottomrule
    \end{tabular}
\end{table}

Fig.~\ref{com_2} shows the simulation results of the active power $p$, reactive power $q$, filter capacitor voltage magnitude $|u|$, and filter inductor current magnitude $|i|$. The results show that all three control methods maintain large-signal stability and return to their pre-fault equilibrium points. Compared with Droop and dVOC, the OBS approach provides smoother transient responses and improved recovery trajectories.

\begin{figure*}[!t]
    \centering
    \includegraphics[width=6.9 in]{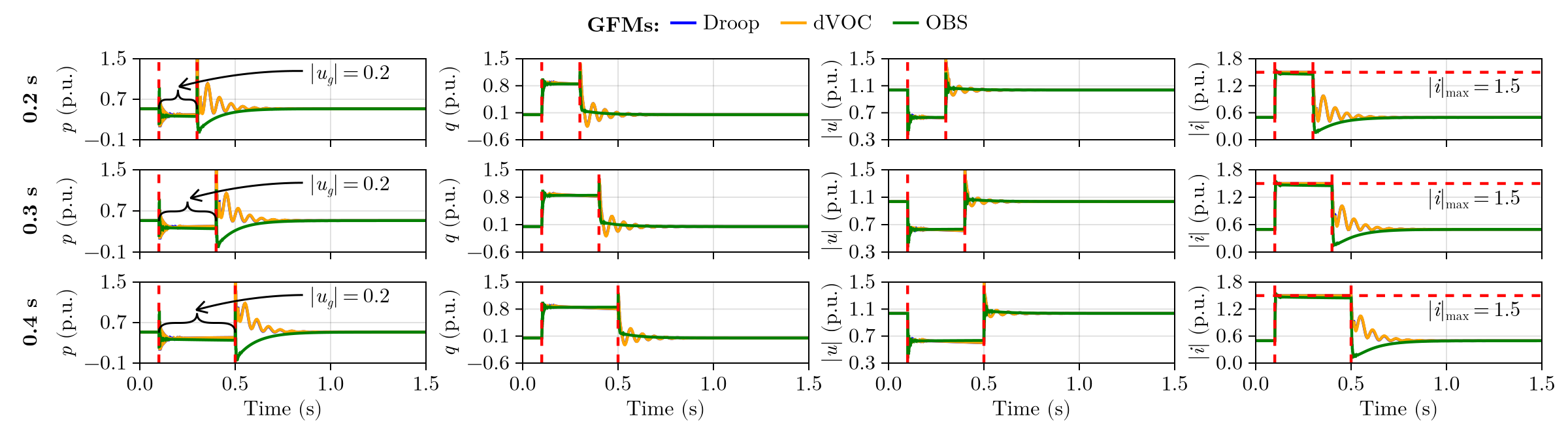}
    \caption{Large-signal characteristic comparisons:  of $p, q, |u|$, and $|i|$  for Droop, dVOC, and OBS GFM controls under grid voltage sag.}
    \label{com_2}
\end{figure*}

\subsubsection{Droop Characteristics}
 
The accuracy of the approximate nonlinear droop characteristic in \eqref{eq:p_approx} as well as the linearized results in \eqref{eq:p_linear_droop} are validated through EMT simulations covering $\mathrm{SCR}=2,6,8,10$, and different observer inductance values $L_o=0.15, 0.3, 0.45$, while $X_g/R_g$ is fixed at 7. Over the frequency range of 0.99--1.01~p.u., the analytical expression in \eqref{eq:p_approx} closely agrees with the EMT simulation results for all tested cases, as shown in Fig.~\ref{droop}. 

\begin{figure*}[!t]
    \centering
    \includegraphics[width=6.9in]{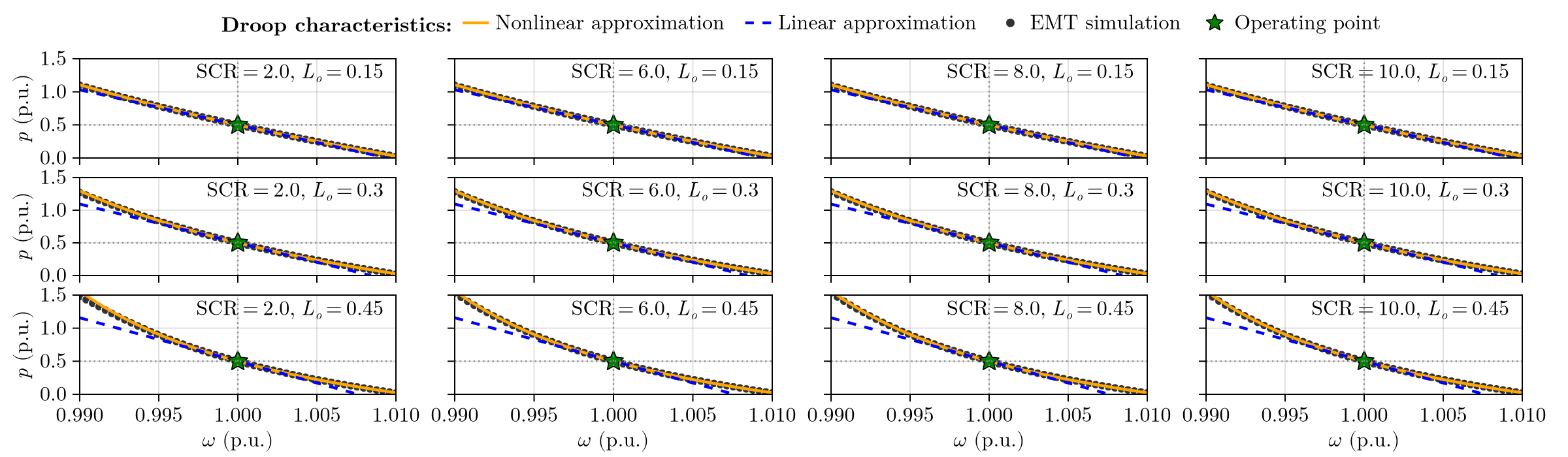}
    \caption{Comparison of droop characteristics: nonlinear and linear approximations versus EMT simulation results under various SCR and $L_o$.}
    \label{droop}
\end{figure*}

\subsubsection{Small-signal stability comparisons}
Small-signal performances are also compared across the three GFM control methods. All GFM control methods are initialized using the same procedure as in the large-signal stability study. At $t=0.1~\mathrm{s}$, the grid SCR is changed to the specified value while all other parameters remain unchanged. 

Time-domain simulations under SCR=$1,2,\cdots,20$ are performed, and the stability performances are summarized in Table~\ref{tab:stability_matrix}. It is noticed that OBS control maintains a stable equilibrium across the entire range. By contrast, Droop and dVOC strategies lose their stability at SCR over 6. Some selective time-domain results are presented in  Fig.~\ref{com_1}, with waveforms of the active power $p$, reactive power $q$, filter-capacitor voltage magnitude $|u|$, and filter-inductor current magnitude $|i|$. It is noticed that the OBS control can remain stable at high-SCR conditions. Moreover, for SCR values at which all three methods remain stable, the OBS control exhibits smoother transient responses than the other two strategies.

\begin{table}[!t]
    \centering
    \caption{Small-signal stability comparison of GFM control strategies under different SCR values (\cmark: stable; \xmark: unstable).}
    \label{tab:stability_matrix}
    \begin{tabular}{c|cccccccccc}
        \toprule
        \multirow{2}{*}{Method} & \multicolumn{9}{c}{SCR} \\
        \cline{2-11}
                                & $1$    & $2$    & $3$    & $4$    & $5$  & $6$  & $7$    & $8$    & $9$    & $20$   \\
        \midrule
        Droop                   & \cmark & \cmark & \cmark & \cmark & \cmark & \xmark & \xmark & \xmark & \xmark & \xmark \\
        dVOC                    & \cmark & \cmark & \cmark & \cmark & \cmark & \xmark & \xmark & \xmark & \xmark & \xmark \\
        OBS                     & \cmark & \cmark & \cmark & \cmark & \cmark & \cmark & \cmark & \cmark & \cmark & \cmark \\
        \bottomrule
    \end{tabular}
\end{table}

\begin{figure*}[!t]
    \centering
    \includegraphics[width=6.9 in]{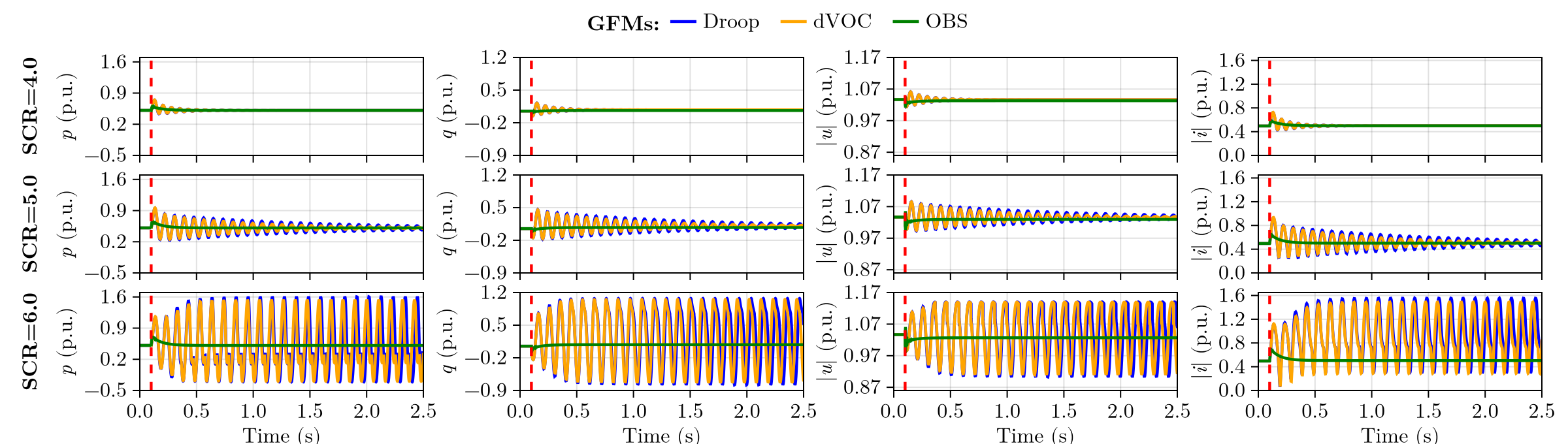}
    \caption{
        Small-signal characteristic comparisons: $p, q, |u|$, and $|i|$ for Droop, dVOC, and OBS GFM controls under varying SCRs.
    }
    \label{com_1}
\end{figure*}

In summary, the results demonstrate that OBS control achieves small-signal stability within the power transmission limit, consistent with the theoretical analysis in Lemma \ref{lem:eq_classification}, whereas the Droop and dVOC strategies fail to maintain stable equilibria under high SCR conditions. This confirms that OBS control exhibits enhanced robustness against wide variations in SCR, thereby providing a broader stable operating range than conventional strategies.

\subsection{Modified IEEE 14-Bus System}

To further evaluate the performance of OBS control and compare it with others, a modified IEEE 14-bus system is adopted, where 4 synchronous generators are replaced by 4 identical GFM inverters, as shown in Fig.~\ref{IEEE-14}. Both inverters and the synchronous generator are connected to the network buses through line impedances of $r_g + \mathrm{j}x_g = 0.02 + \mathrm{j}0.1$. 
The SG is represented by an accurate sixth-order model with a rated power of 448~MW, and its parameters are adopted from~\cite{SG}. All inverters employ identical control strategies, i.e., Droop, dVOC, or OBS, with parameters summarized in Table~\ref{tab:system_params} and  \ref{tab:all_control_params}.
\begin{figure}[!t]
    \centering
    \includegraphics[width=3.4 in]{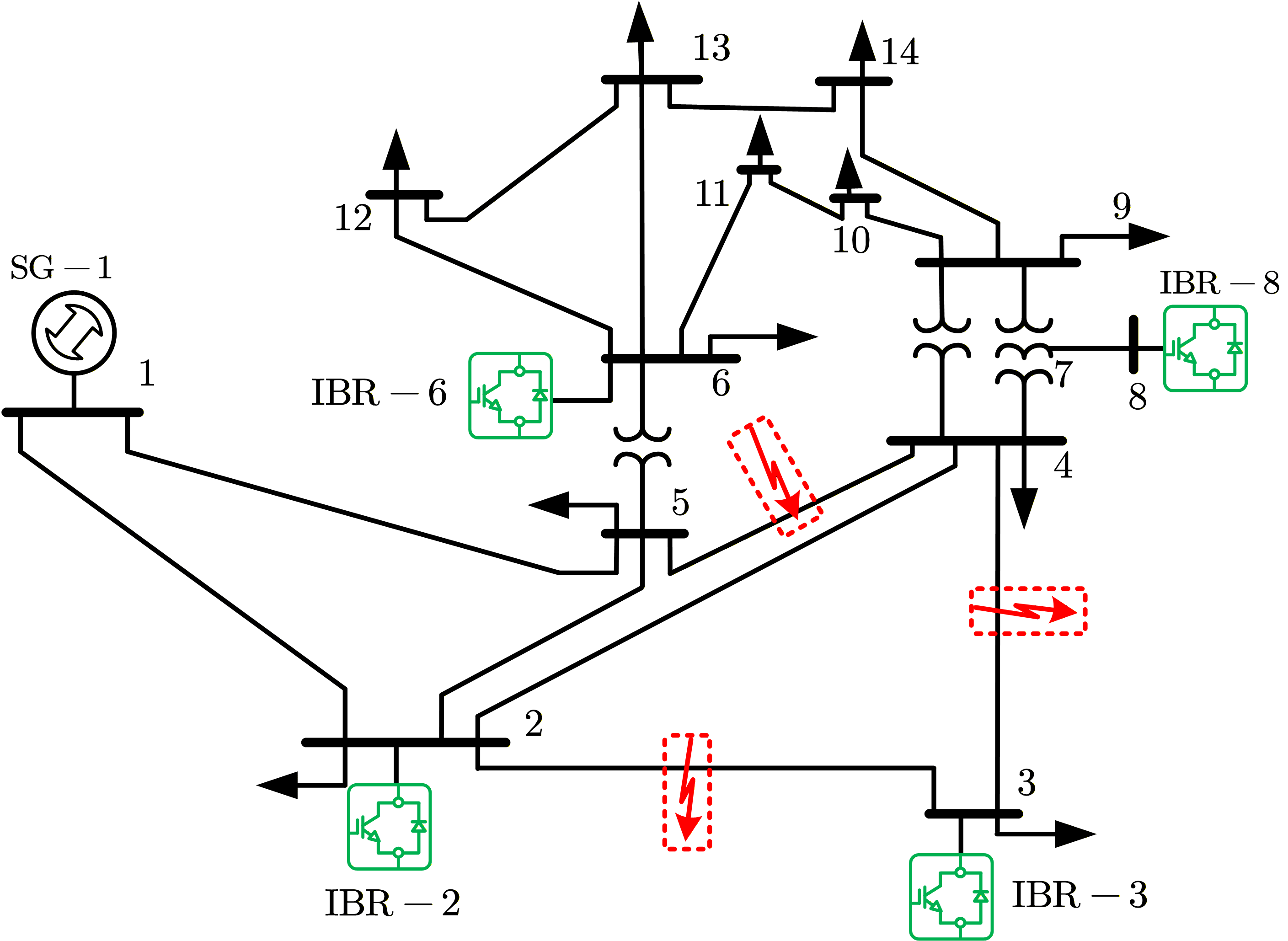}
    \caption{Single-line diagram of the modified IEEE 14-bus test system.}
    \label{IEEE-14}
\end{figure}

\subsubsection{Small-signal stability comparison under different connection impedance $x_g$}

Fig.~\ref{com_3} compares the SG frequency $\omega$ together with the active power $p$, filter capacitor voltage magnitude $|u|$, and filter inductor current magnitude $|i|$ of IBR-2 during step changes in the connection impedance $x_g$ at $t = 0.1$~s under Droop, dVOC, and OBS strategies.
\begin{figure*}[t]
    \centering
    \includegraphics[width=6.9 in]{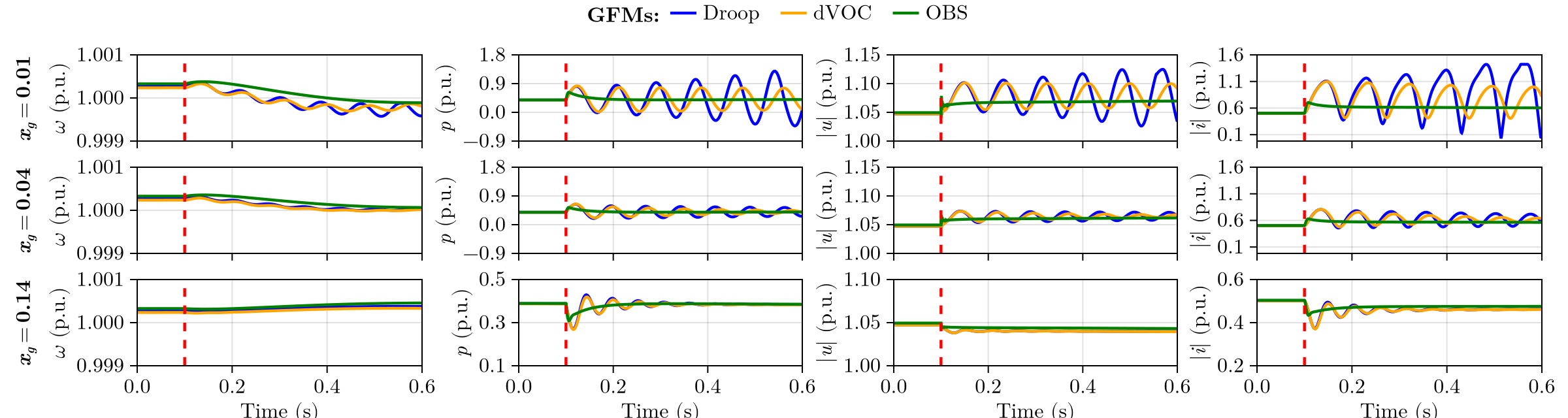}
    \caption{Comparison of Droop, dVOC, and OBS GFM controls under different impedance $x_g$ in the modified IEEE 14-bus system.}
    \label{com_3}
\end{figure*}

When $x_g$ decreases from 0.1 to  0.04 or 0.01, the systems employing Droop and dVOC control exhibit oscillatory responses and become unstable, while the OBS system remains stable and converges to a new steady state. When $x_g$ increases from 0.1 to 0.14, all strategies maintain stability. However, the OBS strategy demonstrates improved dynamic performance, achieving a smooth transition without oscillations, whereas the other two strategies present slight oscillatory behavior. Throughout all scenarios, the resistance $r_g$ is maintained at 0.02.

These results indicate that the OBS strategy provides enhanced small-signal stability over a wider operational range. In contrast, Droop and dVOC strategies tend to lose small-signal stability at lower connection impedance $x_g$. These observations are consistent with the EMT simulation results obtained from the single-inverter infinite-bus scenario.

\subsubsection{Transient stability comparison under short-circuit faults}

The transient stability performance of different GFM control methods is evaluated under several short-circuit fault events applied at the midpoint of the transmission lines connecting Bus~2 and Bus~3, Bus~3 and Bus~4, and Bus~4 and Bus~5, as shown in Fig.~\ref{IEEE-14}. The fault is initiated at $t = 0.05$~s, cleared after $0.05$~s (at $t = 0.1$~s), and subsequently reclosed at $t = 1.0$~s.

EMT simulation results of the SG frequency $\omega$ together with the active power $p$, filter capacitor voltage magnitude $|u|$, and filter inductor current magnitude $|i|$ of IBR-2 under different control strategies are illustrated in Fig.~\ref{com_4}.

\begin{figure*}[t]
    \centering
    \includegraphics[width=6.9 in]{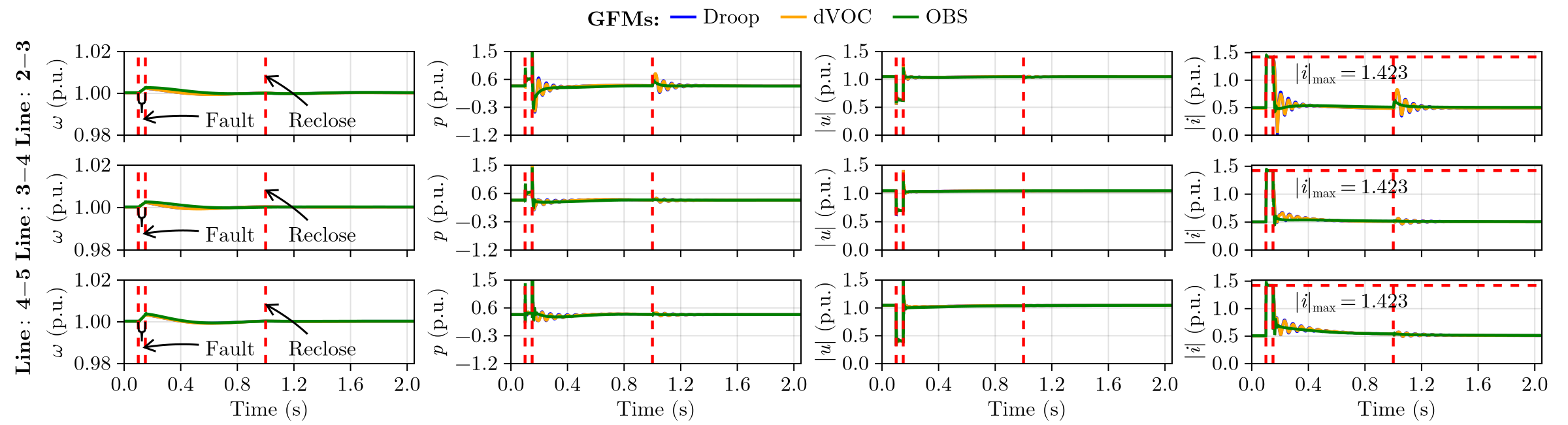}
    \caption{Comparison of Droop, dVOC, and OBS GFM controls under short-circuit fault on lines.}
    \label{com_4}
\end{figure*}

For all three fault locations, all systems remain transiently stable and return to the pre-fault equilibrium after fault clearing and reclosing. Compared with Droop and dVOC control, OBS control provides improved transient dynamic performance, as reflected by smoother state trajectories and reduced oscillations during the post-fault recovery process.

The EMT simulation results further confirm that OBS control can synchronize with the SG, which is an essential capability for the practical deployment of GFM control in hybrid inverter--SG power systems.

\section{Experimental Results}

To further validate the proposed nonlinear analysis of OBS control, 
a \SI{5.5}{\kilo\watt} rapid control prototyping (RCP) experimental test platform was established, as shown in Fig.~\ref{expe_setup}. The setup consists of an InsRealm RTScale real-time controller, which implements the control algorithms and generates PWM signals for the inverter, a \SI{5.5}{\kilo\watt} three-phase inverter, an LCL filter, and a grid simulator. The LCL filter is composed of a \SI{6}{\milli\henry} converter-side inductor, a \SI{10}{\micro\farad} filter capacitor, and a \SI{300}{\micro\henry} grid-side inductor. 
\begin{figure}[!t]
    \centering
    \includegraphics[width=3.4 in]{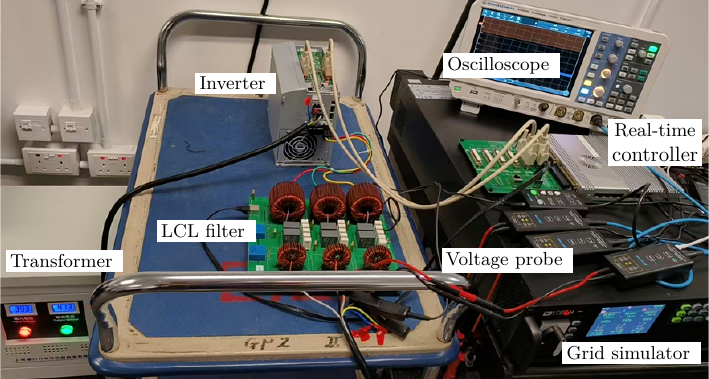}
    \caption{5.5~kW three-phase inverter RCP experimental setup.}
    \label{expe_setup}
\end{figure}

Fig.~\ref{expe} demonstrates the dynamic performance of the system with OBS control under three scenarios: (a) active power reference tracking, (b) voltage reference tracking, and (c) grid voltage sag. Each row corresponds to one scenario, and the four columns show the grid-side $b$-$c$ line voltage and current waveforms measured by an oscilloscope, the active power $p$, the voltage magnitude $|u|$, and the current magnitude $|i|$, respectively. The latter three quantities are calculated by the real-time controller based on the measured three-phase voltage and current. The results show that the system accurately tracks the power reference, regulates the voltage magnitude, and maintains stable operation during grid voltage sags.
\begin{figure*}[!t]
    \centering
    \includegraphics[width=6.9 in]{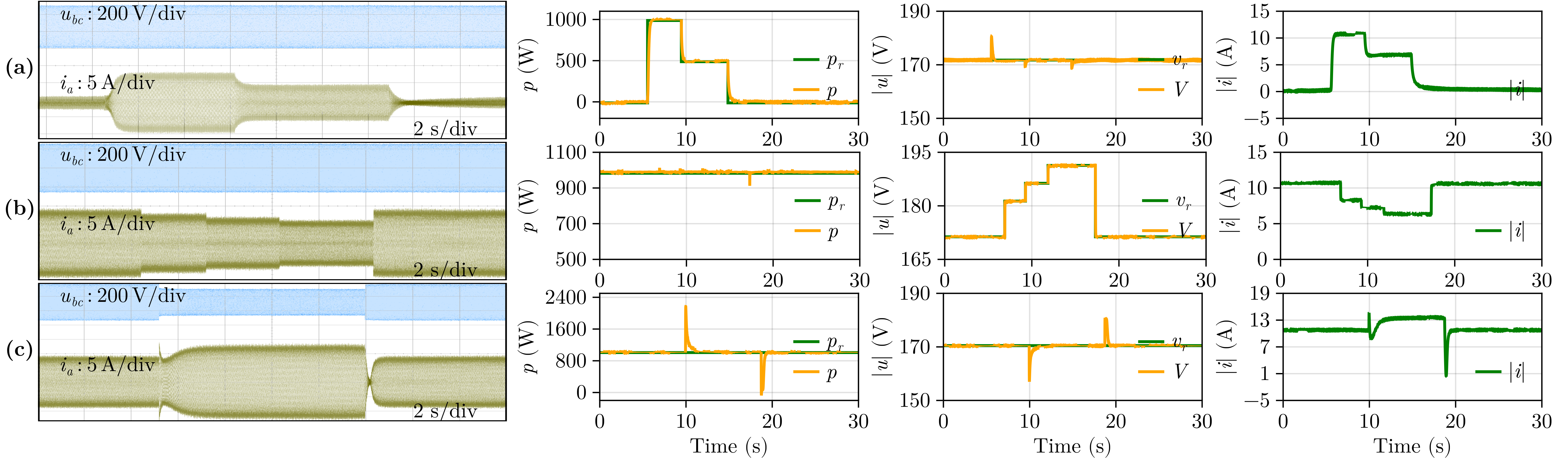}
    \caption{Experimental results of OBS control: (a) active-power reference tracking; (b) voltage reference tracking; (c) 30\% grid voltage sag.}
    \label{expe}
\end{figure*}

Droop characteristics are further tested. Fig.~\ref{expe_droop} compares the measured power responses with the analytical droop characteristics, by varying the grid frequency. The experimental results agree well with the analytical predictions, particularly near the operating point, thereby confirming the practical validity of the derived nonlinear droop characteristics.
\begin{figure}[!t]
    \centering
    \includegraphics[width=3.4 in]{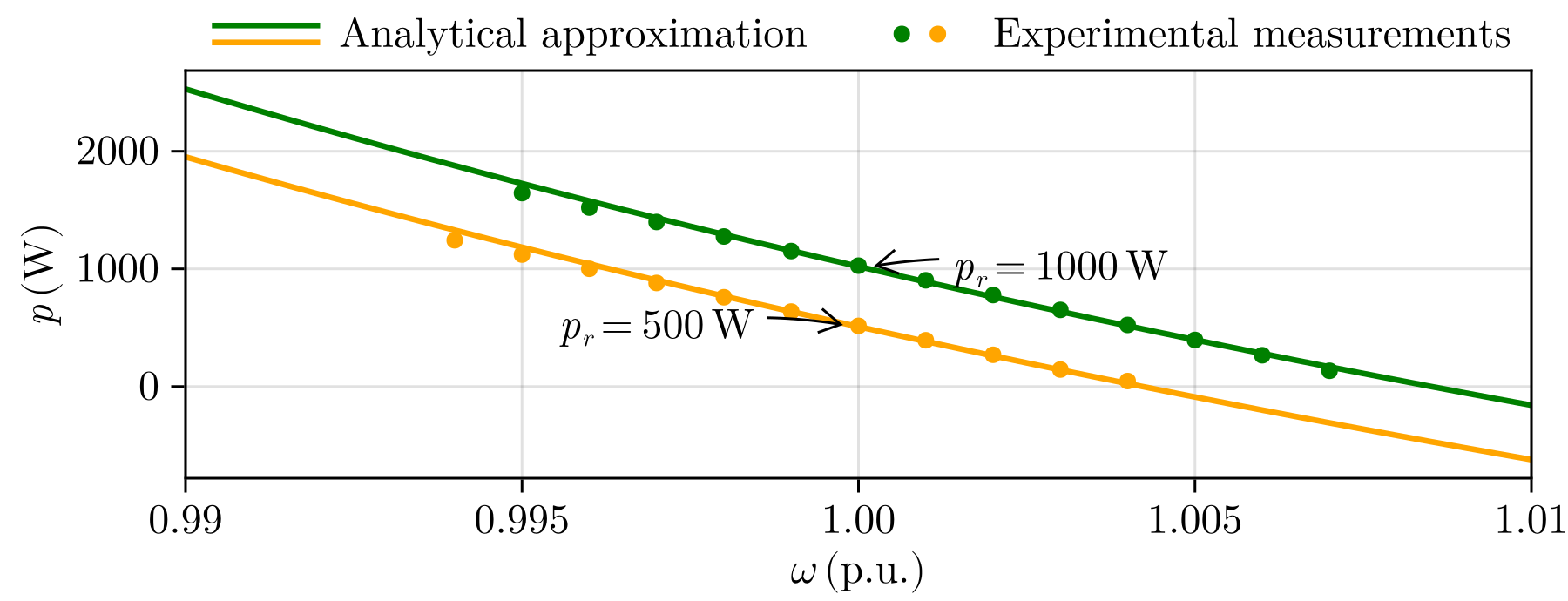}
    \caption{Comparison of droop characteristics between experimental measurements and analytical predictions.}
    \label{expe_droop}
\end{figure}

\section{Conclusion}
This paper has presented a comprehensive analysis of the stability properties of OBS control, together with its droop characteristics. It has been rigorously proven that the almost global asymptotic stability is guaranteed, provided that the controller parameters satisfy the conditions in Theorem 1. The theoretical findings have been validated through extensive EMT simulations and experimental tests, confirming the predicted steady-state droop characteristics, and demonstrating stable synchronization as well as favorable dynamic performance under OBS control. 

\FloatBarrier
\bibliographystyle{IEEEtran}
\bibliography{ref,ref_1}

\newpage

\vfill

\end{document}